\newtheorem{assumption}{Assumption}
\newtheorem{lemma}{Lemma}
\newtheorem{theorem}{Theorem}
\newenvironment{proof}{{\indent \indent \it Proof:}}
\begin{document}
\title{	Finite-Time Gradient Descent-Based Adaptive Neural Network Finite-Time Control Design for Attitude Tracking of a 3-DOF Helicopter}

\author{
	\vskip 1em
	
	Xidong Wang

	\thanks{
					
		Xidong Wang is with the Research Institute of Intelligent Control and Systems, School of Astronautics, Harbin Institute of Technology, Harbin 150001, China (e-mail: 17b904039@stu.hit.edu.cn). 
	}
}

\maketitle
	
\begin{abstract}
This paper investigates a novel finite-time gradient descent-based adaptive neural network finite-time control strategy for the attitude tracking of a 3-DOF lab helicopter platform subject to composite disturbances. First, the radial basis function neural network (RBFNN) is applied to estimate lumped disturbances, where the weights, centers and widths of the RBFNN are trained online via finite-time gradient descent algorithm. Then, a finite-time backstepping control scheme is constructed to fulfill the tracking control of the elevation and pitch angles. In addition, a hybrid finite-time differentiator (HFTD) is introduced for approximating the intermediate control signal and its derivative to avoid the problem of "explosion of complexity" in the traditional backstepping design protocol. Moreover, the errors caused by the HFTD can be attenuated by the combination of compensation signals. With the aid of the stability theorem, it is proved that the closed-loop system is semi-globally uniformly ultimately boundedness in finite time. Finally, a comparison result is provided to illustrate the effectiveness and advantages of the designed control strategy.
\end{abstract}

\begin{IEEEkeywords}
Neural network (NN), finite-time backstepping control, finite-time gradient descent, 3-DOF helicopter platform
\end{IEEEkeywords}

{}

\definecolor{limegreen}{rgb}{0.2, 0.8, 0.2}
\definecolor{forestgreen}{rgb}{0.13, 0.55, 0.13}
\definecolor{greenhtml}{rgb}{0.0, 0.5, 0.0}

\section{Introduction}

\IEEEPARstart{I}{n} the past few years, the laboratory 3-DOF helicopter (Fig. 1) has attracted considerable research interests owing to the similar dynamics with the real one and facilitating to implement various control approaches \cite{1.Li2015F}. In recent years, numerous nonlinear and intelligent control schemes have been established to work out the attitude tracking problem of the helicopter platform \cite{1.Li2015F,2.chen2016N,3.zeghlache2017,4.Chen2018N,5.liu2019attitude,6.Yang2020F,7.li2021}. 

\begin{figure}[!t]\centering
	\includegraphics[width=8.5cm]{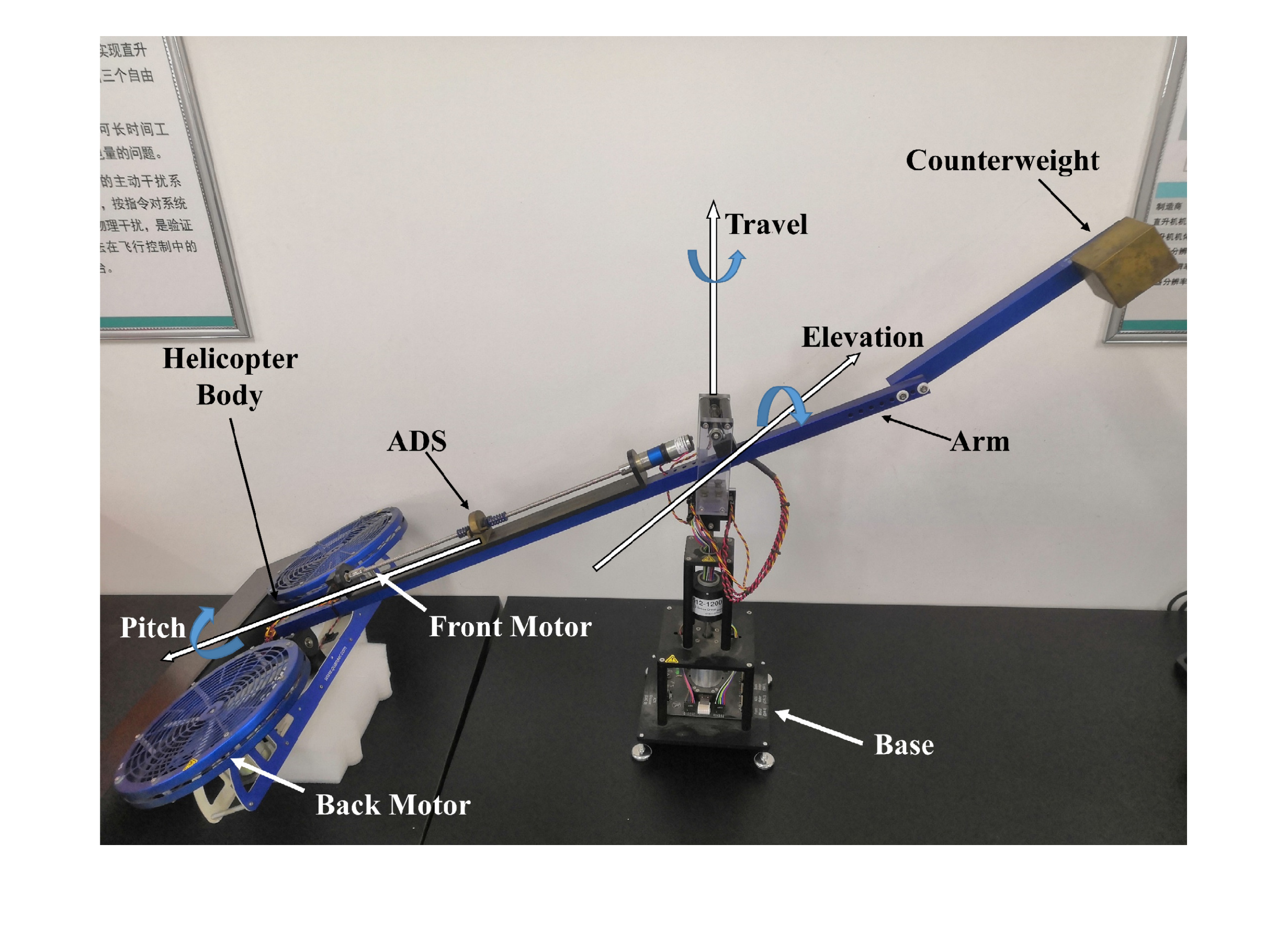}
	\caption{The laboratory 3-DOF helicopter platform}
\end{figure}

In \cite{1.Li2015F}, a nonlinear robust control method, augmented with UDE terms to approximate compound disturbances, was presented to achieve the semi-global asymptotic stability of a table-mount helicopter. The authors in \cite{2.chen2016N} developed a NN-based fault-tolerant control strategy for the helicopter platform under actuator faults, and the NN was applied to estimate the composite disturbance. By integrating the interval type-2 fuzzy logic into SMC, the developed control protocol in \cite{3.zeghlache2017} fulfilled the exponential convergence of helicopter platform tracking error. The authors in \cite{4.Chen2018N} addressed the attitude tracking problem of an experimental helicopter with unknown time delay by utilizing an adaptive NN control approach, where NN was served as the uncertainty compensator. In \cite{6.Yang2020F}, combined with RBFNNs to estimate lumped disturbances, an adaptive backstepping control framework was constructed for the attitude control of a 3-DOF helicopter, where all parameters of the RBFNNs are trained online by means of the gradient descent algorithm. In \cite{7.li2021}, a command filter was introduced into the NN-based backstepping control framework to settle the complexity explosion problem in the design of the 3-DOF helicopter. Nevertheless, all the above-mentioned control strategies are asymptotic stability or uniformly ultimately boundedness, while the finite-time control strategy is more applicable for the attitude tracking control of the 3-DOF helicopter.

Recently, finite-time backstepping control has sparked extensive attention owing to the combination of the merits of conventional backstepping control and finite-time control scheme \cite{lemma5}. Due to the universal approximation capability of NNs and FLSs under certain conditions, the adaptive neural and fuzzy control methods \cite{2018neuro,2019neuro,2020neuro1,2020neuro2,2018fuzzy,2019fuzzy,2020fuzzy1,2020fuzzy2} are introduced into the finite-time backstepping control to cope with the uncertain nonlinearities existing in systems. However, the above-mentioned direct adaptive NN control approaches only consider the adaptation of weights, while the centers and widths of the RBFNN are set empirically and remain fixed during the entire control process, which may diminish the control quality \cite{2019GD}.

Motivated by the aforementioned discussion, a novel finite-time gradient descent-based adaptive NN finite-time control strategy is developed and applied to the attitude tracking control of a 3-DOF helicopter platform with composite disturbances in this work. First, a RBFNN is employed to approximate composite disturbances, where all the parameters of the RBFNN embracing the weights, centers and widths are trained online via finite-time gradient descent algorithm \cite{2006FGD,2020FGD}. Then, inspired by \cite{2020fuzzy2}, a finite-time backstepping control strategy is established to fulfill the tracking control of the elevation and pitch angles. In addition, a HFTD \cite{lemma6} is introduced for approximating the intermediate control signal and its derivative to prevent the problem of complexity explosion in the traditional backstepping design protocol, and the error generated by the HFTD is weakened by the corresponding compensation signal. The main features of the presented control strategy are concluded as follows:
\begin{enumerate}[1)]
	\item Compared with the adaptive NN control strategy where the centers and widths of the NN are artificially set, the proposed RBFNN estimator trains all parameters online by finite-time gradient descent algorithm, which further enhances the approximation capability of disturbances.
	\item By introducing the HFTD and error compensation system, the designed control strategy relaxes the assumption of the desired trajectory, while maintaining satisfactory finite-time tracking performance.
\end{enumerate}

With the help of the stability theorem, it is proved that the closed-loop system is semi-globally uniformly ultimately boundedness in finite time, while the attitude tracking errors can converge to an adequately small domain around the origin in finite time. A comparative simulation result is provided to delineate the effectiveness and advantages of the presented control strategy.

The remaining part of this paper is outlined as follows: Section II presents the system description and preliminaries. The finite-time gradient descent-based adaptive NN finite-time control strategy is designed in Section III. The comparative simulation experiment is implemented in Section V. Section VI gives the conclusion of this paper.

Notation: In this paper, $\left\|  \cdot  \right\|$ denotes the Euclidean norm and ${\mathop{\rm sig}\nolimits} {\left( x \right)^p} = {\left| x \right|^p}{\mathop{\rm sign}\nolimits} \left( x \right)$.
\section{System description and preliminaries}
\subsection{System description}
Fig 1 shows the structure of the 3-DOF helicopter platform. In view of elevation and pitch channels, the model of the platform can be formulated as the following form \cite{2020arXiv}

\begin{equation}
\begin{aligned}
{{\dot x}_1} &= {x_2}\\
{{\dot x}_2} &= \frac{{{L_a}}}{{{J_\alpha }}}{{\bar u}_1} - \frac{g}{{{J_\alpha }}}m{L_a}\cos ({x_1}) + {d_1}(x)\\
{{\dot x}_3} &= {x_4}\\
{{\dot x}_4} &= \frac{{{L_h}}}{{{J_\beta }}}{{\bar u}_2} + {d_2}(x)
\end{aligned}
\end{equation}
where ${x_1}$ and ${x_3}$ stand for elevation and pitch angles, respectively. ${d_1}\left( x \right)$ and ${d_2}\left( x \right)$ denote compound disturbances. The definitions and values of other variables can be seen in \cite{2020arXiv}. 

The control goal of this paper is to develop a finite-time gradient descent-based adaptive NN finite-time control strategy such that the tracking errors of the elevation angle and pitch angle can converge to an adequately small domain around the origin in finite time.

\begin{assumption}
The compound disturbances ${d_1}\left( x \right),{d_2}\left( x \right)$ are smooth and bounded.
\end{assumption}

\begin{assumption}
The reference signals ${x_{1r}}(t),{x_{3r}}(t)$ and their first derivative are known, smooth as well as bounded. 
\end{assumption}
\subsection{Preliminaries}
\begin{lemma} [\cite{lemma1}]
For ${y_j} \in {\bf{R}},j = 1,2, \cdots ,n,0 < \gamma \le 1$
\begin{equation}
{\left( {\sum\limits_{j = 1}^n {\left| {{y_j}} \right|} } \right)^\gamma } \le \sum\limits_{j = 1}^n {{{\left| {{y_j}} \right|}^\gamma } \le {n^{1 - \gamma }}} {\left( {\sum\limits_{j = 1}^n {\left| {{y_j}} \right|} } \right)^\gamma }
\end{equation}
\end{lemma}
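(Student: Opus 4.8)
The plan is to establish the chain of inequalities
\[
\Bigl(\sum_{j=1}^n |y_j|\Bigr)^{\gamma} \;\le\; \sum_{j=1}^n |y_j|^{\gamma} \;\le\; n^{1-\gamma}\Bigl(\sum_{j=1}^n |y_j|\Bigr)^{\gamma}
\]
separately, treating each inequality on its own. Write $s=\sum_{j=1}^n |y_j|$; if $s=0$ every term vanishes and the statement is trivial, so assume $s>0$.

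For the left inequality, I would normalize: set $t_j=|y_j|/s\in[0,1]$, so $\sum_j t_j=1$. Since $0<\gamma\le 1$, for any $t\in[0,1]$ we have $t^{\gamma}\ge t$ (because $t^{\gamma-1}\ge 1$ when $t\le 1$ and $\gamma-1\le 0$, with the boundary case $t=0$ checked directly). Summing over $j$ gives $\sum_j t_j^{\gamma}\ge\sum_j t_j=1$, i.e. $\sum_j |y_j|^{\gamma}\ge s^{\gamma}$, which is exactly the left-hand bound. Equivalently one can invoke the concavity/superadditivity of $x\mapsto x^{\gamma}$ on $[0,\infty)$ and induct on $n$, using $(a+b)^{\gamma}\le a^{\gamma}+b^{\gamma}$ for $a,b\ge 0$; I would present whichever is shortest, likely the normalization argument.

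For the right inequality, the natural tool is concavity of $\phi(x)=x^{\gamma}$ on $[0,\infty)$ together with Jensen's inequality: $\frac1n\sum_j \phi(|y_j|)\le \phi\bigl(\frac1n\sum_j |y_j|\bigr)$, which rearranges to $\sum_j |y_j|^{\gamma}\le n\cdot (s/n)^{\gamma}=n^{1-\gamma}s^{\gamma}$. Alternatively, since this is really a statement about $\ell^p$ norms with $p=1/\gamma\ge 1$ versus $\ell^1$, one could cite the standard norm-comparison inequality $\|y\|_{p}\le\|y\|_1\le n^{1-1/p}\|y\|_p$ and raise to the power $\gamma$; but the Jensen route is self-contained.

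I do not anticipate a genuine obstacle here — the result is elementary and both halves follow from the single fact that $x\mapsto x^{\gamma}$ is concave and increasing on $[0,\infty)$ for $0<\gamma\le 1$. The only points requiring minor care are the degenerate case $s=0$ and the handling of terms with $|y_j|=0$ in the normalization step (where $0^{\gamma}=0$), both of which I would dispatch in a sentence.
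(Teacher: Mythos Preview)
Your argument is correct and complete: the normalization step gives the left inequality, and Jensen's inequality for the concave map $x\mapsto x^{\gamma}$ gives the right one, with the degenerate case $s=0$ handled separately. There is nothing to compare against, however, since the paper does not supply its own proof of this lemma; it is stated with a citation to an external reference and used as a known tool in the stability analysis.
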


\begin{lemma} [\cite{lemma2}]
For any ${\chi _1},{\chi _2} \in {\bf{R}}$, one has:
\begin{equation}
{\left| {{\chi _1}} \right|^{{c_1}}}{\left| {{\chi _2}} \right|^{{c_2}}} \le \frac{{{c_1}}}{{{c_1}{\rm{ + }}{{\rm{c}}_2}}}{\left| {{\chi _1}} \right|^{{c_1}{\rm{ + }}{{\rm{c}}_2}}}{\rm{ + }}\frac{{{c_2}}}{{{c_1}{\rm{ + }}{{\rm{c}}_2}}}{\left| {{\chi _2}} \right|^{{c_1}{\rm{ + }}{{\rm{c}}_2}}}
\end{equation}
where ${c_1} > 0,{c_2} > 0$ are constants.
\end{lemma}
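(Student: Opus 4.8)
The plan is to recognize this inequality as a form of Young's inequality for products (equivalently, the two-point weighted arithmetic--geometric mean inequality) and to derive it from the concavity of the logarithm. First I would dispose of the degenerate cases: if $\chi_1 = 0$ or $\chi_2 = 0$, the left-hand side vanishes while the right-hand side is nonnegative, so the inequality holds trivially. Hence I may assume $|\chi_1| > 0$ and $|\chi_2| > 0$, which is exactly the regime in which the logarithm can be applied.

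Next I would introduce the convex weight $\lambda = c_1/(c_1+c_2) \in (0,1)$, so that $1-\lambda = c_2/(c_1+c_2)$, and set $a = |\chi_1|^{c_1+c_2} > 0$, $b = |\chi_2|^{c_1+c_2} > 0$. Since $\ln(\cdot)$ is concave on $(0,\infty)$, Jensen's inequality gives $\ln(\lambda a + (1-\lambda) b) \ge \lambda \ln a + (1-\lambda)\ln b$. Evaluating the right-hand side, $\lambda \ln a + (1-\lambda)\ln b = \tfrac{c_1}{c_1+c_2}(c_1+c_2)\ln|\chi_1| + \tfrac{c_2}{c_1+c_2}(c_1+c_2)\ln|\chi_2| = c_1\ln|\chi_1| + c_2\ln|\chi_2| = \ln\!\bigl(|\chi_1|^{c_1}|\chi_2|^{c_2}\bigr)$. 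Applying the (increasing) exponential to both sides yields $|\chi_1|^{c_1}|\chi_2|^{c_2} \le \lambda a + (1-\lambda) b = \tfrac{c_1}{c_1+c_2}|\chi_1|^{c_1+c_2} + \tfrac{c_2}{c_1+c_2}|\chi_2|^{c_1+c_2}$, which is the claim.

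An equivalent and equally short route is to invoke the classical Young inequality $st \le s^p/p + t^q/q$ for $s,t\ge 0$ with conjugate exponents $p = (c_1+c_2)/c_1$ and $q = (c_1+c_2)/c_2$ (note $1/p + 1/q = 1$), and substitute $s = |\chi_1|^{c_1}$, $t = |\chi_2|^{c_2}$; the exponents then collapse to $s^p = |\chi_1|^{c_1+c_2}$, $t^q = |\chi_2|^{c_1+c_2}$ with precisely the stated coefficients $1/p = c_1/(c_1+c_2)$ and $1/q = c_2/(c_1+c_2)$. There is no genuine obstacle in this proof; the only point demanding a sentence of care is the degenerate case $\chi_1\chi_2 = 0$, where the logarithm is undefined but the inequality is immediate. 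For completeness one may also note that equality holds exactly when $|\chi_1|^{c_1+c_2} = |\chi_2|^{c_1+c_2}$, i.e. when $|\chi_1| = |\chi_2|$.
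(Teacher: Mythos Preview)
Your proof is correct: the inequality is precisely the weighted AM--GM (equivalently Young's) inequality, and both your concavity-of-$\ln$ argument and your direct reduction to $st \le s^p/p + t^q/q$ with $p=(c_1+c_2)/c_1$, $q=(c_1+c_2)/c_2$ are clean and complete, with the degenerate case $\chi_1\chi_2=0$ handled appropriately.

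Note, however, that the paper does not supply its own proof of this lemma; it is simply quoted from the cited reference as a standard tool, so there is no in-paper argument to compare against. Your write-up would serve perfectly well as a self-contained justification.
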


To approximate the compound disturbances in the 3-DOF helicopter platform, the following lemma is required. 
\begin{lemma} [\cite{lemma3}]
For any continuous function $d\left( x \right)$ defined on a compact set $\Omega $, there exist $\bar \sigma  > 0$ and a RBFNN ${W^T}Q\left( x \right)$ satisfying
\begin{equation}
d\left( x \right) = {W^T}Q\left( x \right) + \sigma \left( x \right),{\rm{ }}\left| {\sigma \left( x \right)} \right| \le \bar \sigma 
\end{equation}
where $x = {\left[ {{x_1},{x_2}, \cdots ,{x_N}} \right]^T}$ and $W = {\left[ {{w_1},{w_2}, \cdots ,{w_M}} \right]^T}$ stand for the input vector and weight vector of the RBFNN, respectively. $Q\left( x \right) = {\left[ {{q_1}\left( x \right),{q_2}\left( x \right), \cdots ,{q_M}\left( x \right)} \right]^T}$ represents the basis function vector, where the basis function ${q_i}\left( x \right)$ is defined as
\begin{equation}
{q_i}\left( x \right) = \exp \left( { - \frac{{{{\left\| {x - {\mu _i}} \right\|}^2}}}{{\delta _i^2}}} \right)
\end{equation}
with ${\mu _i} = {\left[ {{\mu _{i1}},{\mu _{i2}}, \cdots ,{\mu _{iN}}} \right]^T}$ and ${\delta _i}$ being the center and width vectors of the basis function, respectively.
\end{lemma}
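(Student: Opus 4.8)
The plan is to prove the substantive universal-approximation content behind this statement — namely that for every $\varepsilon>0$ one can choose a finite RBFNN $W^TQ(x)$ of the Gaussian form defined in the lemma with $\sup_{x\in\Omega}\left| d(x)-W^TQ(x)\right|<\varepsilon$ — since the lemma exactly as written is vacuous (taking $W=0$, $\sigma=d$, $\bar\sigma=\max_{x\in\Omega}\left| d(x)\right|$ already works). The lemma then follows by setting $\sigma:=d-W^TQ$ and $\bar\sigma:=\varepsilon$. I would argue by mollification followed by quadrature, the classical route going back to Park and Sandberg.

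First I would extend $d$ from the compact set $\Omega$ to a continuous, compactly supported function $\tilde d$ on $\mathbf{R}^N$ via the Tietze extension theorem, so that $\tilde d$ is bounded and uniformly continuous with $\mathrm{supp}\,\tilde d\subset B_R$ for some $R>0$. Next, using the normalized Gaussian kernel $\phi_\delta(z)=(\pi\delta^2)^{-N/2}\exp(-\|z\|^2/\delta^2)$, I would invoke the standard mollifier estimate $\|\tilde d\ast\phi_\delta-\tilde d\|_\infty\to 0$ as $\delta\to 0^{+}$: writing $(\tilde d\ast\phi_\delta)(x)-\tilde d(x)=\int_{\mathbf{R}^N}\bigl(\tilde d(x-z)-\tilde d(x)\bigr)\phi_\delta(z)\,dz$, splitting the integral into $\|z\|\le\eta$ and $\|z\|>\eta$, and bounding the first piece with uniform continuity (together with $\int\phi_\delta=1$) and the second with $\int_{\|z\|>\eta}\phi_\delta\to 0$. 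I would fix $\delta$ small enough that this error is below $\varepsilon/2$ uniformly on $\Omega$.

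Then I would discretize the convolution: on the compact set $\Omega\times B_R$ the map $(x,y)\mapsto\tilde d(y)\phi_\delta(x-y)$ is uniformly continuous, so the Riemann sums $\sum_{i=1}^{M}\tilde d(y_i)\phi_\delta(x-y_i)\Delta_i$ over a sufficiently fine partition $\{y_i,\Delta_i\}$ of $B_R$ approximate $(\tilde d\ast\phi_\delta)(x)=\int_{B_R}\tilde d(y)\phi_\delta(x-y)\,dy$ uniformly in $x\in\Omega$; I would choose the partition so that this error is also below $\varepsilon/2$. Finally I would observe that this Riemann sum is already a network of the required type, with $\mu_i:=y_i$, $\delta_i:=\delta$, $w_i:=(\pi\delta^2)^{-N/2}\tilde d(y_i)\Delta_i$ and $q_i(x)=\exp(-\|x-\mu_i\|^2/\delta_i^2)$, and the triangle inequality with the two $\varepsilon/2$ bounds yields $\sup_{x\in\Omega}\left| d(x)-W^TQ(x)\right|<\varepsilon$.

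The hard part — really the only point needing care — is that the mollification error and the quadrature error must both be controlled \emph{uniformly in} $x$, not merely pointwise, since the conclusion is a sup-norm statement; this is precisely where compactness of $\Omega$ and of $\mathrm{supp}\,\tilde d$ enters, and is why pointwise density arguments would not suffice. A Stone--Weierstrass-type argument (exploiting that a product of Gaussians is again a Gaussian) is a conceivable alternative, but keeping track of the admissible widths and centers there is less transparent than this direct mollifier-plus-Riemann-sum construction, so I would favor the route above.
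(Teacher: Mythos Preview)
Your argument is correct and is the standard Park--Sandberg mollifier-plus-quadrature construction. However, there is nothing to compare against: the paper does not prove this lemma at all. It is stated as a cited result (the \texttt{[\textbackslash cite\{lemma3\}]} tag), and the paper simply invokes it later in the stability analysis without supplying any argument. So your proposal is not ``the same as'' or ``different from'' the paper's proof --- the paper has none.

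Your observation that the lemma \emph{as literally written} is vacuous is also correct and worth flagging: since $\bar\sigma$ is merely asserted to exist rather than quantified as arbitrary, the trivial choice $W=0$, $\sigma=d$, $\bar\sigma=\max_{\Omega}|d|$ already satisfies it. The authors of course intend the $\varepsilon$-quantified universal approximation statement you prove, and their downstream use of the lemma (equation~(16), where $\bar\sigma$ is treated as a small residual absorbed into $\eta_3$) only makes sense under that reading. So you have identified and filled a genuine gap in the exposition, even if the underlying mathematics is classical.
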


\section{Main Results}
In this section, the design process of the control law for the elevation angle tracking will be displayed in detail.
\subsection{Control Design}
Consider the elevation channel of the system (1):
\begin{equation}
\begin{aligned}
&{{\dot x}_1} = {x_2}\\
&{{\dot x}_2} = \frac{{{L_a}}}{{{J_\alpha }}}{{\bar u}_1} - \frac{g}{{{J_\alpha }}}m{L_a}\cos ({x_1}) + {d_1}(x)
\end{aligned}
\end{equation}

Then, the tracking errors can be defined as follows:
\begin{equation}
\begin{aligned}
&{e_1} = {x_1} - {x_{1r}}\\
&{e_2} = {x_2} - {x_{1,c}}
\end{aligned}
\end{equation}
where ${x_{1,c}}$ is the approximation of intermediate control input ${\alpha _m}$ by utilizing HFTD, which is formulated as [20].
\begin{equation}
\begin{aligned}
{{\dot x}_{1,c}} = & {x_{2,c}}\\
{\varepsilon ^2}{{\dot x}_{2,c}} = & - {a_0}\left( {{x_{1,c}} - {\alpha _m}} \right) - {a_1}{\mathop{\rm sig}\nolimits} {\left( {{x_{1,c}} - {\alpha _m}} \right)^{{r_1}}} \\
&- {b_0}\varepsilon {x_{2,c}} - {b_1}{\mathop{\rm sig}\nolimits} {\left( {\varepsilon {x_{2,c}}} \right)^{{r_2}}}
\end{aligned}
\end{equation}

According to lemma 6 in \cite{2021arXiv}, we have
\begin{equation}
{x_{1,c}} - {\alpha _m}\left( t \right) = {\rm O}\left( {{\varepsilon ^{{\rho _1}{r_2}}}} \right),{x_{2,c}} - {\dot \alpha _m}\left( t \right) = {\rm O}\left( {{\varepsilon ^{{\rho _1}{r_2} - 1}}} \right)\left( {t > {t_1}} \right)
\end{equation}

To compensate the error generated by HFTD, the error compensation signals ${\xi _1},{\xi _2}$ are constructed as follows [16]
\begin{equation}
\begin{aligned}
&{{\dot \xi }_1} =  - {k_1}{\xi _1} + {\xi _2} + \left( {{x_{1,c}} - {\alpha _m}} \right) - {n_1}\xi _1^h\\
&{{\dot \xi }_2} =  - {k_2}{\xi _2} - {\xi _1} - {n_2}\xi _2^h
\end{aligned}
\end{equation}
where ${k_1} > 0,{k_2} > 0,{n_1} > 0,{n_2} > 0$ are the appropriate design parameters. $h = {h_2}/{h_1} < 1$, ${h_1},{h_2}$ are positive odd numbers.

Define the compensated tracking errors ${z_1},{z_2}$ as the following form
\begin{equation}
\begin{aligned}
&{z_1} = {e_1} - {\xi _1}\\
&{z_2} = {e_2} - {\xi _2}
\end{aligned}
\end{equation}

Then, intermediate control input ${\alpha _m}$  and the controller ${\bar u_1}$ are developed as
\begin{equation}
\begin{aligned}
{\alpha _m} = & - {k_1}{e_1} + {{\dot x}_{1r}} - {m_1}z_1^h\\
{{\bar u}_1} = &\frac{{{J_\alpha }}}{{{L_a}}}\left( { - {k_2}{e_2} - {e_1} + {x_{2,c}} + \frac{g}{{{J_\alpha }}}m{L_a}\cos ({x_1})}\right.\\
 &\phantom{=\;\;}\left.{ - {m_2}z_2^h - {{W^T}Q\left( x \right)}} \right)
\end{aligned}
\end{equation}
where  ${m_1} > 0,{m_2} > 0$ are the appropriate design parameters and ${W^T}Q\left( x \right)$ denotes a RBFNN to estimate the composite disturbance ${d_1}(x)$. 

To train the parameters of the RBFNN, we construct the NN approximation error as follows
\begin{equation}
E = {\dot z_{2,c}} + {k_2}{z_2} + {z_1} + {m_2}z_2^h - {n_2}\xi _2^h
\end{equation}
where ${\dot z_{2,c}}$ is the approximation of ${\dot z_2}$ by adopting HFTD.

Design the cost function as $Y = 0.5{E^2}$, then in terms of finite-time gradient descent algorithm [18], [19], the iterative equations of the weights ${w_i}$, centers ${\mu _{i,j}}$ and widths ${\delta _i}$ can be given as
\begin{equation}
\begin{aligned}
&\Delta {w_i}\left( k \right) = E{q_i}\\
&{w_i}\left( {k + 1} \right) = {w_i}\left( k \right) + \lambda {\mathop{\rm sig}\nolimits} {\left( {\Delta {w_i}\left( k \right)} \right)^p}\\
&\Delta {\mu _{i,j}}\left( k \right) = E{w_i}{q_i}\frac{{\left( {{x_j} - {\mu _{i,j}}\left( k \right)} \right)}}{{\delta _i^2\left( k \right)}}\\
&{\mu _{i,j}}\left( {k + 1} \right) = {\mu _{i,j}}\left( k \right) + \lambda {\mathop{\rm sig}\nolimits} {\left( {\Delta {\mu _{i,j}}\left( k \right)} \right)^p}\\
&\Delta {\delta _i}\left( k \right) = E{w_i}{q_i}\frac{{{{\left\| {{x_j} - {\mu _i}\left( k \right)} \right\|}^2}}}{{\delta _i^3\left( k \right)}}\\
&{\delta _i}\left( {k + 1} \right) = {\delta _i}\left( k \right) + \lambda {\mathop{\rm sig}\nolimits} {\left( {\Delta {\delta _i}\left( k \right)} \right)^p}
\end{aligned}
\end{equation}
where $\lambda  \in \left( {0,1} \right)$ denotes the learning rate, $i = 1,2, \cdots ,M$, $j = 1,2, \cdots ,N$, $0 \le p < 1$.
\subsection{Stability Analysis}
\begin{theorem}
Consider the system (6) subject to Assumptions 1 and 2. If the HFTD is adopted as (8), the error compensation signals are constructed as (10), the intermediate control input is designed as (12), and the iterative equations of the parameters of the RBFNN are developed as (14), then the designed control laws ${\bar u_1}$ can guarantee that the closed-loop system is semi-globally uniformly ultimately boundedness in finite time, while the attitude tracking errors can converge to an adequately small domain around the origin in finite time.
\end{theorem}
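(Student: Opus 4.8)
The plan is to run a Lyapunov argument on the \emph{compensated} error system, in the spirit of finite-time command-filtered backstepping. First I would substitute the controller~(12) and the compensation dynamics~(10) into~(6)--(7) and use $\dot x_{1,c}=x_{2,c}$ from the HFTD~(8). The term $+(x_{1,c}-\alpha_m)$ deliberately inserted in $\dot\xi_1$ cancels the HFTD mismatch appearing in $\dot e_1$, while the controller term $+x_{2,c}$ cancels $\dot x_{1,c}$ in $\dot e_2$; what remains is
\[
\begin{aligned}
\dot z_1 &= z_2 - k_1 z_1 - m_1 z_1^{h} + n_1 \xi_1^{h},\\
\dot z_2 &= -k_2 z_2 - z_1 - m_2 z_2^{h} + n_2 \xi_2^{h} + \big(d_1(x)-W^{T}Q(x)\big).
\end{aligned}
\]
The crucial observation is that the residual entering $\dot z_2$ is exactly the disturbance-estimation error, which by~(13) equals the training signal $E$ up to the HFTD error in $\dot z_{2,c}$; hence the finite-time gradient-descent law~(14), minimising $Y=\tfrac12 E^{2}$, is precisely what keeps this residual small. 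The $\xi$-subsystem~(10) is a nearly decoupled, finite-time-stable system driven only by the ${\rm O}(\varepsilon^{\rho_1 r_2})$ HFTD error from~(9).

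Next I would take the composite Lyapunov function $V=\tfrac12 z_1^{2}+\tfrac12 z_2^{2}+\tfrac12 \xi_1^{2}+\tfrac12 \xi_2^{2}$. Differentiating along the dynamics above, the cross terms $\pm z_1 z_2$ and $\pm\xi_1\xi_2$ cancel, leaving negative quadratic and negative $(1+h)$-power terms in each coordinate plus the perturbations $n_1 z_1\xi_1^{h}$, $n_2 z_2\xi_2^{h}$, $z_2\big(d_1-W^{T}Q\big)$ and $\xi_1(x_{1,c}-\alpha_m)$. Applying Lemma~2 (Young's inequality) to each perturbation, I would split it into a quadratic piece absorbed by enlarging the $k_i$ gains, a $(1+h)$-power piece absorbed by the $m_i,n_i$ terms, and a constant; here $|\sigma(x)|\le\bar\sigma$ from Lemma~3, boundedness of $d_1$ (Assumption~1), boundedness of $x_{1r},\dot x_{1r}$ (Assumption~2), and the smallness of the HFTD error~(9) all feed into a residual constant $\eta$. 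Invoking Lemma~1 with $|\cdot|^{1+h}=\big((\cdot)^{2}\big)^{(1+h)/2}$ and $(1+h)/2<1$ gives $\sum(\cdot)^{1+h}\ge(2V)^{(1+h)/2}$, so that
\[
\dot V \le -c_1 V - c_2 V^{\beta} + \eta,\qquad c_1,c_2>0,\quad \beta=\tfrac{1+h}{2}\in(0,1).
\]

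From this inequality the finite-time stability lemma~\cite{lemma5} yields practical finite-time stability: $V$ --- and therefore $z_1,z_2,\xi_1,\xi_2$ --- enters a residual set of size ${\rm O}(\eta)$ within a settling time $T$ bounded by an explicit function of $V(0),c_1,c_2,\beta$. Because $e_i=z_i+\xi_i$ and the compensation signals are themselves finite-time bounded, the true tracking error $e_1=x_1-x_{1r}$ converges in finite time to a neighbourhood of the origin that can be shrunk by increasing the gains and the number of RBFNN nodes and by decreasing $\varepsilon$ and $\lambda$; together with the boundedness of $x_{1r}$ this establishes that all closed-loop signals are bounded, i.e.\ the system is semi-globally uniformly ultimately bounded in finite time (the pitch channel being treated identically).

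I expect the main obstacle to be the interface between the \emph{discrete-time} update recursion~(14) and the continuous-time Lyapunov analysis: one must rigorously argue that $d_1(x)-W^{T}Q(x)$ (equivalently $E$) stays bounded by a small constant, which needs a boundedness/convergence property of the finite-time gradient descent applied to $Y=\tfrac12 E^{2}$ together with the standard semi-global bootstrapping that keeps the closed-loop trajectory inside the compact set $\Omega$ on which Lemma~3's approximation is valid --- this is what ties admissible initial conditions, gains, and node count together. A secondary technicality is that~(9) only holds for $t>t_1$, so boundedness on $[0,t_1]$ must be obtained separately from the intrinsic boundedness of the HFTD states.
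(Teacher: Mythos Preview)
Your proposal is correct and follows essentially the same route as the paper: the same Lyapunov function $V=\tfrac12\sum_i(z_i^2+\xi_i^2)$, the same cancellation of the $z_1z_2$ and $\xi_1\xi_2$ cross terms, Lemma~2 to split the perturbations $n_i z_i\xi_i^{h}$, $(x_{1,c}-\alpha_m)\xi_1$ and $\sigma z_2$, Lemma~1 to collapse the $(1{+}h)$-powers into $V^{(1+h)/2}$, and then the finite-time stability lemma to conclude. The two technical issues you flag (the discrete update~(14) versus the continuous-time bound on $d_1-W^TQ$, and the restriction $t>t_1$ in~(9)) are handled in the paper exactly at the level of rigor you anticipate: the former is asserted as the existence of a $t_2$ after which $|\sigma|\le\bar\sigma$, and the inequality is only claimed for $t>\max\{t_1,t_2\}$.
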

\begin{proof}
By substituting (9), (10), (12) into (13) yields
\begin{equation}
\begin{aligned}
E &= {\dot z_{2,c}} + {k_2}{z_2} + {z_1} + {m_2}z_2^h - {n_2}\xi _2^h \\
&= {\dot z_2} - {\rm O}\left( {{\varepsilon ^{{\rho _2}{r_2} - 1}}} \right) + {k_2}{z_2} + {z_1} + {m_2}z_2^h - {n_2}\xi _2^h \\
&= {d_1} - {W^T}Q\left( x \right) - {\rm O}\left( {{\varepsilon ^{{\rho _2}{r_2} - 1}}} \right)
\end{aligned}
\end{equation}

In terms of \emph{lemma 3} and finite-time gradient descent algorithm, there exists ${t_2} > 0$ when $t > {t_2}$, one has
\begin{equation}
{d_1} - {\rm O}\left( {{\varepsilon ^{{\rho _2}{r_2} - 1}}} \right) = {W^T}Q\left( x \right) + \sigma ,{\rm{ }}\left| \sigma  \right| \le \bar \sigma 
\end{equation}

Consider the following Lyapunov function
\begin{equation}
V = \frac{1}{2}\sum\limits_{i = 1}^2 {\left( {z_i^2 + \xi _i^2} \right)} 
\end{equation}

Taking the time derivative of $V$, we have 
\begin{equation}
\begin{aligned}
\dot V = & - \sum\limits_{i = 1}^2 {\left( {{k_i}z_i^2 + {m_i}z_i^{1 + h} - {n_i}\xi _i^h{z_i}} \right)} \\
&- \sum\limits_{i = 1}^2 {\left( {{k_i}\xi _i^2 + {n_i}\xi _i^{1 + h}} \right) + \left( {{x_{1,c}} - {\alpha _m}} \right){\xi _1}} \\
&+ \left( {\sigma  + {\rm O}\left( {{\varepsilon ^{{\rho _2}{r_2} - 1}}} \right)} \right){z_2}
\end{aligned}
\end{equation}

Based on \emph{lemma 2}, when $t > \max \left\{ {{t_1},{t_2}} \right\}$, the following inequality holds
\begin{equation}
\begin{aligned}
\dot V \le  &- \left[ {{k_1}{z_1}^2 + \left( {{k_2} - \frac{1}{2}} \right){z_2}^2 + \left( {{k_1} - \frac{1}{2}} \right){\xi _1}^2 + {k_2}{\xi _2}^2} \right] \\
 &- \left[ {\sum\limits_{i = 1}^2 {\left( {{m_i} - \frac{{{n_i}}}{{1 + h}}} \right)z_i^{1 + h} + \sum\limits_{i = 1}^2 {\frac{{{n_i}}}{{1 + h}}\xi _i^{1 + h}} } } \right] + {\eta _3}
\end{aligned}
\end{equation}
where ${\eta _3} = 0.5{\rm O}\left( {{\varepsilon ^{2{\rho _1}{r_2}}}} \right) + 0.5{\left( {\bar \sigma  + {\rm O}\left( {{\varepsilon ^{{\rho _2}{r_2} - 1}}} \right)} \right)^2}$. 

According to \emph{lemma 1}, (19) can be further rewritten as
\begin{equation}
\dot V(x) \le  - {\eta _1}V(x) - {\eta _2}V{(x)^{\frac{{1 + h}}{2}}}{\rm{ + }}{\eta _3}
\end{equation}
where
\begin{equation}
\begin{aligned}
&{\eta _1} = \min \left\{ {\left( {2{k_1} - 1} \right),\left( {2{k_2} - 1} \right)} \right\}\\
&{\eta _2} = \min \left\{ {\left( {{m_i} - \frac{{{n_i}}}{{1 + h}}} \right){2^{\frac{{1 + h}}{2}}},\frac{{{n_i}}}{{1 + h}}{2^{\frac{{1 + h}}{2}}}} \right\}
\end{aligned}
\end{equation}

By adopting \emph{lemma 5} in [25], the closed-loop system is semi-globally uniformly ultimately boundedness in finite time. By designing the appropriate parameters, ${z_i},{\xi _i}$ can converge to adequately small domains around the origin in finite time $T$, which are given as follows
\begin{equation}
\begin{aligned}
&\left| {{z_i}} \right| \le \min \left\{ {\sqrt {\frac{{2{\eta _3}}}{{\left( {1 - \kappa } \right){\eta _1}}}} ,\sqrt {2{{\left( {\frac{{{\eta _3}}}{{\left( {1 - \kappa } \right){\eta _2}}}} \right)}^{\frac{2}{{1 + h}}}}} } \right\}\\
&\left| {{\xi _i}} \right| \le \min \left\{ {\sqrt {\frac{{2{\eta _3}}}{{\left( {1 - \kappa } \right){\eta _1}}}} ,\sqrt {2{{\left( {\frac{{{\eta _3}}}{{\left( {1 - \kappa } \right){\eta _2}}}} \right)}^{\frac{2}{{1 + h}}}}} } \right\}
\end{aligned}
\end{equation}
where $\kappa  \in \left( {0,1} \right)$

For $t \ge {T}$, the tracking error will stay in
\begin{equation}
\begin{aligned}
\left| {{e_1}} \right| &\le \left| {{z_1}} \right| + \left| {{\xi _1}} \right| \\
&\le \min \left\{ {2\sqrt {\frac{{2{\eta _3}}}{{\left( {1 - \kappa } \right){\eta _1}}}} ,2\sqrt {2{{\left( {\frac{{{\eta _3}}}{{\left( {1 - \kappa } \right){\eta _2}}}} \right)}^{\frac{2}{{1 + h}}}}} } \right\}
\end{aligned}
\end{equation}

The proof is completed.
\end{proof}

\section{Simulation Results}
This section conducts a contrastive simulation experiment on the tracking control of elevation angle to delineate the effectiveness and advantages of our developed control strategy. The initial condition of the elevation angle is ${x_1}\left( 0 \right) =  - {24^o}$ and the compound disturbance is selected as ${d_1} = \sin (2t)$. The target signal is given by
\begin{equation}
{x_{1r}}(t){\rm{ = }}\frac{\pi }{{18}}\sin (0.3\pi t - \frac{\pi }{2})
\end{equation}

Design the parameters of the constructed control strategy as ${k_1} = 1,{k_2} = 2,h = 0.6,{m_1} = {m_2} = 0.5,{n_1} = {n_2} = 1,{a_0} = 5,{a_1} = 0.5,\varepsilon  = 0.01,{b_0} = 2,{b_1} = 0.5,{r_1} = {r_2} = 0.5,p = 0.6,M = 5,\lambda  = 0.005$, and the adaptive NN backstepping control method in \cite{6.Yang2020F} is applied as the comparison approach. The comparison simulation results are shown in Figs. 2-3 where Fig. 2 depicts the curves of the attitude tracking error by adopting the developed control strategy as well as the NN backstepping control method, and Fig. 3 displays the curves of control input.
\begin{figure}\centering
	\includegraphics[width=0.4\textwidth]{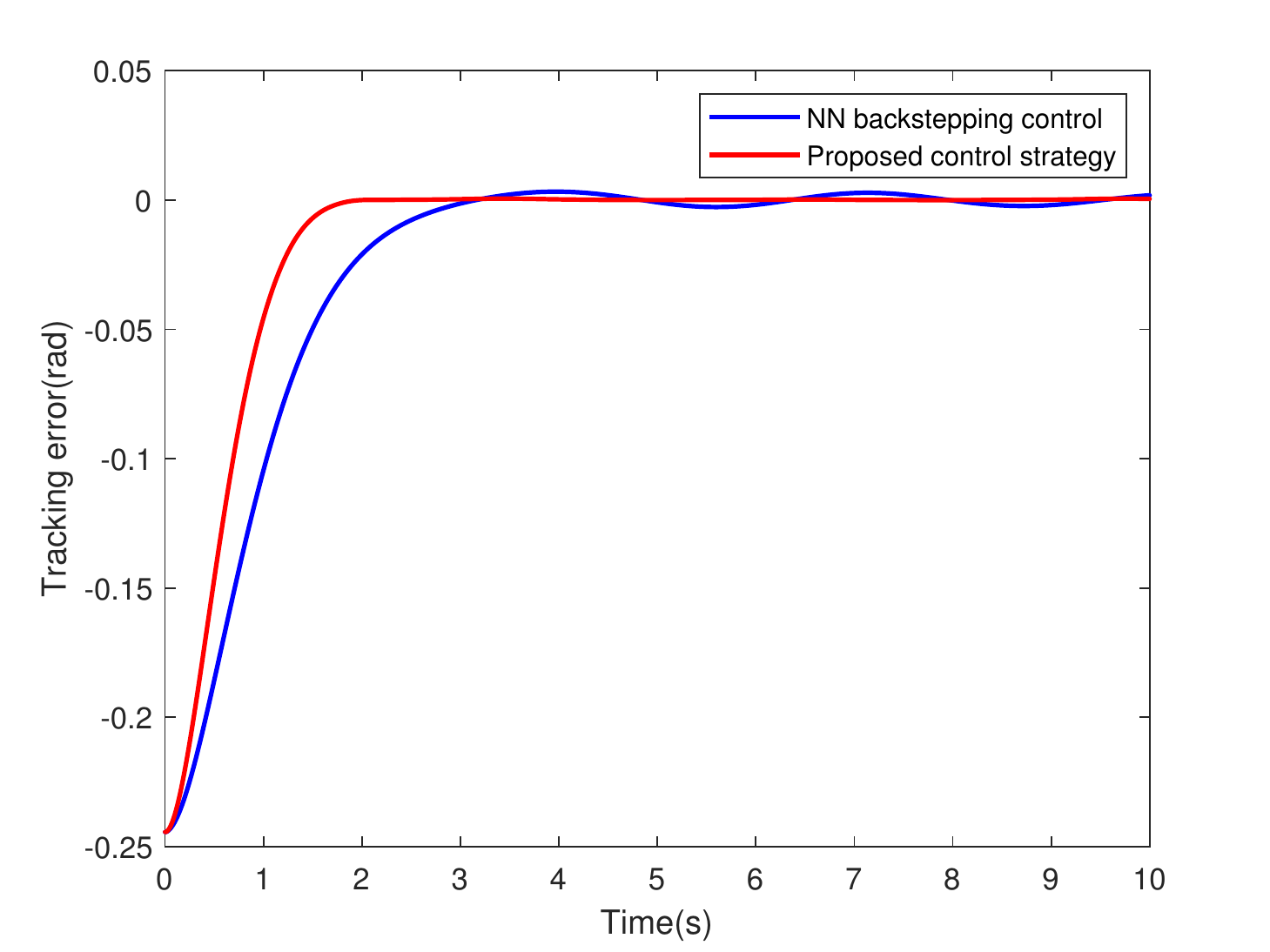}
	\caption{Tracking error}
\end{figure}

\begin{figure}\centering
	\includegraphics[width=0.4\textwidth]{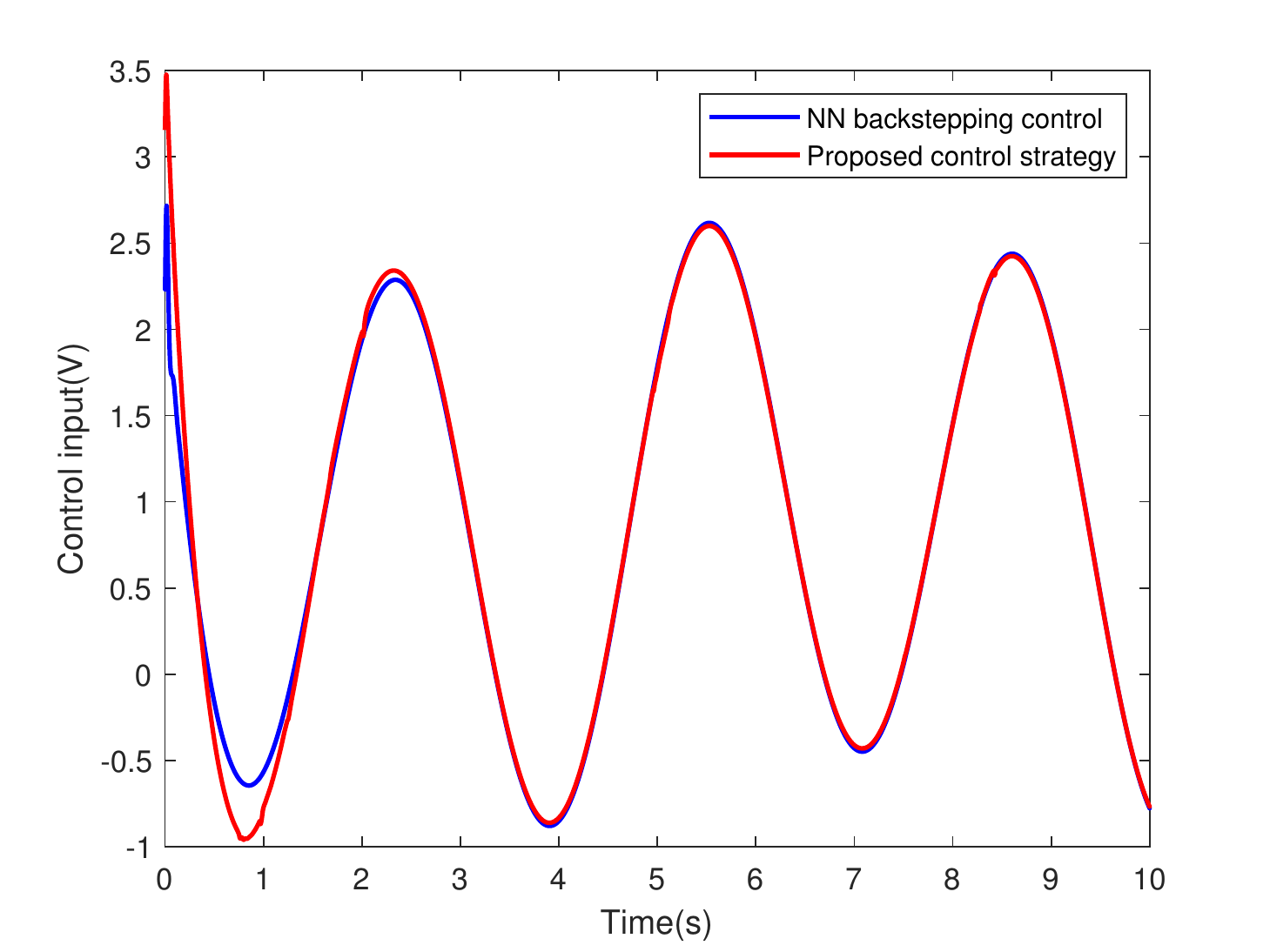}
	\caption{Control input ${\bar u}_1$}
\end{figure}

It is apparent from the simulation results that our developed control scheme can guarantee that the attitude tracking error converges to an adequately small domain around the origin in finite time. Furthermore, it can be observed that compared with the approach in [6], the control scheme developed by us possesses the features of faster convergence speed coupled with higher tracking accuracy.

\section{Conclusion}
In this work, a new finite-time gradient descent-based adaptive neural network finite-time control scheme has been presented to deal with the tracking problem of a 3-DOF helicopter platform with composite disturbances. A RBFNN is introduced to approximate composite disturbances, where all the parameters of the RBFNN are trained online by utilizing finite-time gradient descent algorithm. By integrating HFTD and error compensation system into the finite-time backstepping control framework, the problem of complexity explosion is addressed, and the influence of filter error is alleviated. The theoretical analysis illustrates that the closed-loop system is semi-globally uniformly ultimately boundedness in finite time. The effectiveness and advantages of the designed control strategy are well demonstrated by the comparative simulation.

\bibliographystyle{Bibliography/IEEEtranTIE}
\bibliography{Bibliography/IEEEabrv,Bibliography/myRef}\ 

\end{document}